\def\rbb{\mathbb{R}}
\def\trp{^T}
\def\tr{\mathop{\rm Tr}\nolimits} 
\def\half{\frac{1}{2}}
\def\re{{\mathbb R}}
\def\C{{\mathbb C}} 
\def\begce{\begin{center}}
\def\endce{\end{center}}
\def\begar{\begin{array}}
\def\endar{\end{array}}
\def\begeq{\begin{equation}}
\def\endeq{\end{equation}}
\def\begdi{\begin{displaymath}}
\def\enddi{\end{displaymath}}
\def\begdis{\begin{eqnarray*}}
\def\enddis{\end{eqnarray*}}
\def\begeqa{\begin{eqnarray}}
\def\endeqa{\end{eqnarray}} 
\newtheorem{theorem}{Theorem}
\def\tr{\mathop{\rm Tr}\nolimits} 
\title{\LARGE \bf  A Direct Coupling Coherent Quantum Observer for a Qubit, including Observer Measurements}
\author{Ian R.~Petersen and Elanor H. Huntington
\thanks{This work was supported by the
Australian Research Council (ARC) under grant FL110100020 and the Air Force Office of Scientific
Research (AFOSR), under agreement number FA2386-16-1-4065. }%
\thanks{Ian R. Petersen is with the Research School of  Engineering, The Australian National University, Canberra, ACT 2601, Australia.
         {\tt\small i.r.petersen@gmail.com} } 
\thanks{Elanor H. Huntington is with the 
Research School of Engineering, The Australian National University, Canberra, ACT 0200,
Australia. Email: Elanor.Huntington@anu.edu.au.}
}%
\begin{document}

\maketitle
\thispagestyle{empty}
\pagestyle{empty}

\begin{abstract}
This paper proposes a direct coupling coherent quantum observer for a quantum plant which consists of a two level quantum system. The quantum observer, which is a quantum harmonic oscillator, includes homodyne detection measurements. It is shown that the observer can be designed so that it does not affect the quantum variable of interest in the quantum plant and that measured output converges in a given sense to the plant variable of interest. Also, the plant variable of interest-observer system can be described by a set of linear quantum stochastic differential equations. A minimum variance unbiased estimator form of the Kalman filter is derived for linear quantum systems and applied to the direct coupled coherent quantum observer. 
\end{abstract}

\section{Introduction} \label{sec:intro}
 A number of papers have recently considered the problem of constructing a coherent quantum observer for a quantum system; e.g., see \cite{VP9a,MEPUJ1a,MJP1}. In the coherent quantum observer problem, a quantum plant is coupled to a quantum observer which is also a quantum system. The quantum observer is constructed to be a physically realizable quantum system  so that the system variables of the quantum observer converge in some suitable sense to the variables of interest for the quantum plant. 

The papers \cite{PET14Aa,PET14Ba,PET14Ca,PET14Da}  considered the problem of constructing a direct coupling quantum observer for a given quantum system. In particular, the paper \cite{PET14Ba} considered the case in which the quantum plant was a two level system and the quantum observer was a linear quantum harmonic oscillator. Also, the papers \cite{PeHun1a,PeHun3a} considered the problem of whether such direct coupling coherent observers could be experimentally implemented. In addition, the paper \cite{PeHun2a} considered with the direct coupling coherent observer of \cite{PET14Aa} could be experimentally implemented in an experiment which included homodyne detection measurements of the observer. 

In this paper, we build on the results of \cite{PET14Ba} and \cite{PeHun2a} to consider the case in which the quantum plant is a  two level system and the quantum observer is a linear quantum harmonic oscillator subject to measurements using homodyne detection. Similar convergence results for the quantum observer as obtained in \cite{PeHun2a} are obtained in this case. Also, as in \cite{PET14Ba}, the plant observer system considering only the plant variable of interest is described by a set of linear quantum stochastic differential equations (QSDEs) in spite of the fact that finite level systems are normally described in terms of bilinear QSDEs; e.g., see \cite{EMPUJ4}. However, in this case, measurements are available from the quantum observer. This means that we can apply a version of the Kalman filter to the linear  plant observer QSDEs. 

The paper develops a notion of a minimum variance unbiased estimator for a general set of linear QSDEs,  building on the fact that the classical Kalman Filter can be regarded as the minimum variance unbiased estimator even in the case of non-Gaussian noises and initial conditions; e.g., see \cite{AT67,KS72}. The equations for this estimator are developed for the general case and then applied to the particular case of the plant observer system. This provides a numerically straightforward way of estimating the variable of interest for the qubit system when using homodyne detection measurements.

\section{Direct Coupling Coherent Quantum Observer with Observer Measurement}
We first consider  the dynamics of a single qubit spin system, which will correspond to the quantum plant; see also \cite{EMPUJ4}.  
 The quantum mechanical behavior of the system is described in terms of the system \emph{observables} which are self-adjoint operators on the complex Hilbert space $\mathfrak{H}_p = \C^2$.   The commutator of two scalar operators $x$ and $y$ in ${\mathfrak{H}_p}$ is  defined as $[x, y] = xy - yx$.~Also, for a  vector of operators $x$ in ${\mathfrak H}_p$, the commutator of ${x}$ and a scalar operator $y$ in ${\mathfrak{H}_p}$ is the  vector of operators $[{x},y] = {x} y - y {x}$, and the commutator of ${x}$ and its adjoint ${x}^\dagger$ is the  matrix of operators 
\begdi [{x},{x}^\dagger] \triangleq {x} {x}^\dagger - ({x}^\# {x}^T)^T, \enddi 
where ${x}^\# \triangleq (x_1^\ast\; x_2^\ast \;\cdots\; x_n^\ast )^T$ and $^\ast$ denotes the operator adjoint. In the case of complex vectors (matrices) $^\ast$ denotes the complex conjugate while $^\dagger$ denotes the conjugate transpose. 

The vector of system variables for the single qubit spin system under consideration is 
\begdi x_p=(x_1,x_2,x_3)^T\triangleq (\sigma_1,\sigma_2,\sigma_3),\enddi 
where $\sigma_1$, $\sigma_2$ and $\sigma_3$ are spin operators. Here, $x_p$ is a self-adjoint vector of operators; i.e., $x_p=x_p^\#$.~In particular $x_p(0)$ is represented by the Pauli matrices; i.e.,
\begin{eqnarray*}
\sigma_1(0)&=&\left(\begin{array}{cc}
         0 & 1 \\ 1 & 0
        \end{array} \right),\;\; 
\sigma_2(0)=\left(\begin{array}{cc}
         0 & -{\pmb i} \\ {\pmb i} & 0
        \end{array} \right),\\
\sigma_3(0)&=&\left(\begin{array}{cc}
         1 & 0 \\ 0 & -1
        \end{array} \right).
\end{eqnarray*}
 Products of the spin operators satisfy 
\begeq 
\label{Pauli_prod}
\sigma_i\sigma_j = \delta_{ij}+ {\pmb i} \sum_{k}\epsilon_{ijk}\sigma_k.
\endeq
 It  then follows that the commutation relations for the spin operators are
\begeq \label{eq:Pauli_CCR}
[\sigma_i,\sigma_j] = 2{\pmb i} \sum_{k}\epsilon_{ijk}\sigma_k,
\endeq
where $\delta_{ij}$ is the Kronecker delta and $\epsilon_{ijk}$ denotes the Levi-Civita tensor. The dynamics of the system variables $x$ are determined by the system Hamiltonian which is a self-adjoint operator on $\mathfrak{H}_p$. The {Hamiltonian} is chosen to be linear in $x_p$; i.e., 
\begdi {\mathcal{H}}_p=r_p^T x_p(0) \;\;\enddi
where $r_p\in \re^3$. 
The plant model is then given by the differential equation
\begin{eqnarray}
\dot x_p(t) &=& -{\pmb i}[x_p(t),\mathcal{H}_p]; \nonumber \\
&=& A_px_p(t); \quad x_p(0)=x_{0p}; \nonumber \\
z_p(t) &=& C_px_p(t)
 \label{plant}
\end{eqnarray}
where $z_p$ denotes the  system variable to be estimated by the observer and  $C_p\in \rbb^{1\times 3}$; e.g., see \cite{EMPUJ4}. Also, $A_p \in \re^{3\times 3}$. In order to obtain an expression for the matrix $A_p$ in terms of $r_p$, we  define the linear mapping
$\Theta: \C^3 \rightarrow \C^{3\times 3}$ as
\begeq \label{eq:Theta_definition}
\Theta(\beta)= \left(\begin{array}{ccc}
         0 & \beta_3 & -\beta_2 \\ -\beta_3 & 0 & \beta_1 \\ \beta_2  & -\beta_1  & 0
        \end{array} \right).
\endeq  
Then, it was shown in \cite{EMPUJ4} that
\begdi x_p(t)x_p(t)^T=I + {\pmb i} \Theta(x_p(t)). \enddi 
Similarly, the commutation relations for the spin operators are written as 
\begin{equation}
\label{comm_1}
 [x_p(t),x_p(t)^T]=2 {\pmb i} \Theta(x_p(t)). 
\end{equation}
Also, it was shown in \cite{EMPUJ4} that 
\begin{equation}
-{\pmb i}[x_p(t),r_p^T x_p(t)] = - 2  \Theta(r_p) x_p(t)
\label{Ap}
\end{equation}
and hence $A_p = - 2  \Theta(r_p)$. 

In addition, it is shown in \cite{EMPUJ4} that the mapping $\Theta(\cdot)$ has the following properties:
\begin{eqnarray}
\label{eq:Theta_1} \Theta(\beta)\gamma &=& - \Theta(\gamma) \beta,\\
 \label{eq:Theta_beta_beta} \Theta(\beta)\beta &=& 0,\\
 \label{eq:Theta_multiplication} \Theta(\beta)\Theta(\gamma) &=& \gamma \beta^T -\beta^T \gamma I,\\
 \label{eq:Theta_composition} \Theta\left(\Theta(\beta)\gamma\right)&=&\Theta(\beta)\Theta(\gamma) - \Theta(\gamma)\Theta(\beta).
\end{eqnarray}
Note that a quantum system of this form will be physically realizable which means that the commutation relation (\ref{comm_1}) will hold for all times $t \geq 0$. 

We now describe the linear quantum system  which will correspond to the quantum observer; see also \cite{JNP1,PET10B,NJD09,GJ09,ZJ11}. 
This system is described by QSDEs of the form
\begin{eqnarray}
dx_o &=& A_ox_odt+B_odw;\quad x_o(0)=x_{0o};\nonumber \\
dy_o &=& C_ox_odt+dw;\nonumber \\
z_o &=& Ky_o
 \label{observer}
\end{eqnarray}
where $dw = \left[\begin{array}{l}dQ\\dP\end{array}\right]$ is a $2\times 1$ vector of quantum noises expressed in quadrature form corresponding to the input field for the observer and $dy_o$ is the corresponding output field; e.g., see \cite{JNP1,NJD09}.  The observer output $z_o$ will be  a real scalar quantity obtained by applying homodyne detection to the observer output field. $A_o \in \rbb^{2\times 2}$, $B_o \in \rbb^{2\times 2}$, $C_o\in \rbb^{2 \times 2}$.  Also,  $x_o=\left[\begin{array}{l}q_o\\p_o\end{array}\right]$  is a vector of self-adjoint 
system variables corresponding to the observer position and momentum operators; e.g., see \cite{JNP1}. We  assume that the plant variables commute with the observer variables. The system dynamics (\ref{observer}) are determined by the observer system Hamiltonian and coupling operators which are operators on the underlying  Hilbert space for the observer. For the quantum observer under consideration, this Hamiltonian is a self-adjoint operator given by the 
quadratic form:
$\mathcal{H}_o=\half x_o(0)\trp R_o x_o(0)$, where $R_o$ is a real symmetric matrix. Also, the coupling operator $L$ is defined by a matrix $W_o \in \rbb^{2\times 2}$ so that
\begin{equation}
\label{Wo}
\left[\begin{array}{c} L+ L^* \\ \frac{ L - L^*}{i}\end{array}\right]
 = W_o x_o.
\end{equation}
Then, the corresponding matrices $A_o$, $B_o$ and $C_o$ in 
(\ref{observer}) are given by 
\begin{equation}
\label{eq_coef_cond_Ao}
A_o=2J R_o+\frac{1}{2}W_o^TJW_o, ~~B_o = JW_o^TJ,~~C_o = W_o
\end{equation}
 where 
\[
J = \left[\begin{array}{ll} 0 & 1  \\ -1 & 0 \end{array} \right];
\]
e.g., see \cite{JNP1,NJD09}.  
 Furthermore, we will assume that the  quantum observer  is coupled to the quantum plant as shown in Figure \ref{F1}. 
\begin{figure}[htbp]
\begin{center}
\includegraphics[width=8cm]{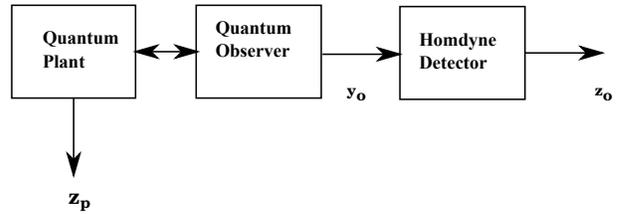}
\end{center}
\caption{Plant Observer System.}
\label{F1}
\end{figure}
 We define a coupling Hamiltonian which defines the coupling between the quantum plant and the  quantum observer:
\[
\mathcal{H}_c = x_{p}(0)\trp R_{c} x_{o}(0).
\]
The augmented quantum system consisting of the quantum plant and the quantum observer is then a  quantum system  described by the total Hamiltonian
\begin{eqnarray}
\mathcal{H}_a &=& \mathcal{H}_p+\mathcal{H}_c+\mathcal{H}_o.
\label{total_hamiltonian}
\end{eqnarray}
where the coupling operator $L$ defined in (\ref{Wo}). 

Extending the approach used in \cite{PET14Aa,PET14Ba}, we assume that $\mathcal{H}_p = 0$ and we can write
\begin{equation}
\label{Rc}
R_c = \alpha\beta^T,
\end{equation}
 $R_o = \omega_o I$, $W_o = \sqrt{\kappa} I$ where $\alpha  \in \rbb^{2}$, $\beta \in \rbb^{2}$,  $\omega_o > 0$ and $\kappa > 0$. In addition, we assume
\begin{equation}
\label{alpha}
\alpha =  C_p^T.
\end{equation}
Then, the total Hamiltonian (\ref{total_hamiltonian}) will be given by 
\[
\mathcal{H}_a = \alpha^Tx_p(0)\beta^Tx_o(0)+ \half x_o(0)\trp R_o x_o(0)
\]
since in this case the quantities $\alpha^Tx_p(0)$ and $\beta^Tx_o(0)$ are commuting scalar operators. 
Also, it follows that the augmented quantum  system is described by the equations
\begin{eqnarray}
d x_p(t) &=& -2\Theta(\alpha)x_p(t)\beta^Tx_o(t)dt;~ x_p(0)=x_{0p};\nonumber \\
d x_o(t)
&=& -\frac{\kappa}{2}x_o dt +  2 \omega_o Jx_o dt 
+2J\beta \alpha^T x_p dt - \sqrt{\kappa}dw; \nonumber \\
&& x_o(0)=x_{0o};\nonumber \\
dy_o &=& \sqrt{\kappa} x_o dt + dw; \nonumber \\
z_p(t) &=& \alpha^Tx_p(t);\nonumber \\
z_o(t) &=& Ky_o(t);
\label{augmented_system}
\end{eqnarray}
e.g., see \cite{JNP1,NJD09,EMPUJ4}.

It follows from (\ref{augmented_system}) that the quantity $z_p(t) = \alpha^Tx_p(t)$  satisfies the differential equation
\begin{eqnarray}
\label{zop}
d z_p(t)  &=& -2\alpha^T\Theta(\alpha)x_p(t)\beta^Tx_o(t)dt = 0.
\end{eqnarray}
using (\ref{eq:Theta_beta_beta}) and the fact that $\Theta(\alpha)$ is skew symmetric. That is, the quantity $z_p(t)$ remains constant and is not affected by the coupling to the coherent quantum observer:
\[
z_p(t) = z_p(0)~ \forall t \geq 0.
\]
Now using this result in (\ref{augmented_system}), it follows that 
\begin{eqnarray}
\label{xot1}
d x_o(t) &=&-\frac{\kappa}{2}x_o dt +  2 \omega_o Jx_o dt 
+2J\beta z_p dt - \sqrt{\kappa}dw.\nonumber \\
\end{eqnarray}

 Combining equations (\ref{augmented_system}), (\ref{zop}) and (\ref{xot1}), we obtain the following reduced dimension QSDEs describing the augmented quantum plant variable of interest and the quantum observer:
\begin{eqnarray}
d z_p(t)&=&0; ~ z_p(0)=\alpha^Tx_{0p};\nonumber \\
d x_o(t) &=&-\frac{\kappa}{2}x_o dt +  2 \omega_o Jx_o dt 
+2J\beta z_p dt - \sqrt{\kappa}dw; \nonumber \\
&& x_o(0)=x_{0o};\nonumber \\
dy_o &=& \sqrt{\kappa} x_o dt + dw. 
\label{augmented_system1}
\end{eqnarray}

This is a set of linear QSDEs. Hence, we can analyze this system in a similar way to \cite{PeHun2a}. To analyse the system (\ref{augmented_system1}), we first calculate the steady state value of the quantum expectation of the observer variables as follows:
\begin{eqnarray*}
<\bar x_o> &=& -2 \left[\begin{array}{ll} -\frac{\kappa}{2} & 2 \omega_o \\-2\omega_o &  -\frac{\kappa}{2} \end{array}\right]^{-1} J\beta z_p \nonumber \\
&=& \frac{4}{\kappa^2+16 \omega_o^2}\left[\begin{array}{ll} \kappa & 4 \omega_o \\-4\omega_o &  \kappa \end{array}\right]J\beta z_p.
\end{eqnarray*}
Then, we define the quantity
\[
\tilde x_o = x_o - <\bar x_o> = x_o - \frac{4}{\kappa^2+16 \omega_o^2}\left[\begin{array}{ll} \kappa & 4 \omega_o \\-4\omega_o &  \kappa \end{array}\right]J\beta z_p.
\]
We can now re-write the equations (\ref{augmented_system1}) in terms of $\tilde x_o$ as follows
\begin{eqnarray}
\label{augmented3}
d \tilde x_o &=& \left[\begin{array}{ll} -\frac{\kappa}{2} & 2 \omega_o \\-2\omega_o &  -\frac{\kappa}{2} \end{array}\right]x_o dt 
+2J\beta z_p dt - \sqrt{\kappa}dw \nonumber \\
&=&  \left[\begin{array}{ll} -\frac{\kappa}{2} & 2 \omega_o \\-2\omega_o &  -\frac{\kappa}{2} \end{array}\right]\tilde x_o dt \nonumber \\
&&- 2 \left[\begin{array}{ll} -\frac{\kappa}{2} & 2 \omega_o \\-2\omega_o &  -\frac{\kappa}{2} \end{array}\right]
\left[\begin{array}{ll} -\frac{\kappa}{2} & 2 \omega_o \\-2\omega_o &  -\frac{\kappa}{2} \end{array}\right]^{-1} J\beta z_p dt\nonumber \\
&&+2J\beta z_p dt - \sqrt{\kappa}dw \nonumber \\
&=& \left[\begin{array}{ll} -\frac{\kappa}{2} & 2 \omega_o \\-2\omega_o &  -\frac{\kappa}{2} \end{array}\right]\tilde x_o dt 
 - \sqrt{\kappa}dw; \nonumber \\
dy_o &=& \sqrt{\kappa} \tilde x_o dt - 2 \sqrt{\kappa}\left[\begin{array}{ll} -\frac{\kappa}{2} & 2 \omega_o \\-2\omega_o &  -\frac{\kappa}{2} \end{array}\right]^{-1} J\beta z_p dt+ dw \nonumber \\
&=& - 2 \sqrt{\kappa}\left[\begin{array}{ll} -\frac{\kappa}{2} & 2 \omega_o \\-2\omega_o &  -\frac{\kappa}{2} \end{array}\right]^{-1} J\beta z_p dt + dw^{out}
\end{eqnarray}
where
\[
dw^{out} = \sqrt{\kappa} \tilde x_o dt + dw.
\]

We now look at the transfer function of the system 
\begin{eqnarray}
\label{augmented4}
\dot{\tilde x}_o &=& \left[\begin{array}{ll} -\frac{\kappa}{2} & 2 \omega_o \\-2\omega_o &  -\frac{\kappa}{2} \end{array}\right]\tilde x_o  
 - \sqrt{\kappa}w; \nonumber \\
w^{out} &=& \sqrt{\kappa} \tilde x_o  + w,
\end{eqnarray}
which is given by
\[
G(s) = -\kappa \left[\begin{array}{ll}s+\frac{\kappa}{2} & - 2 \omega_o\\2\omega_o & s+\frac{\kappa}{2}\end{array}\right]^{-1}.
\]
It is straightforward to verify that this transfer function is such that
\[
G(j\omega)G(j\omega)^\dagger =I
\]
for all $\omega$. That is $G(s)$ is all pass. Also, the matrix $\left[\begin{array}{ll} -\frac{\kappa}{2} & 2 \omega_o \\-2\omega_o &  -\frac{\kappa}{2} \end{array}\right]$ is Hurwitz and hence, the system (\ref{augmented4}) will converge to a steady state in which $dw^{out}$ represents a standard quantum white noise with zero mean and  unit intensity. Hence, at steady state, the equation 
\begin{equation}
\label{yo}
dy_o = - 2 \sqrt{\kappa}\left[\begin{array}{ll} -\frac{\kappa}{2} & 2 \omega_o \\-2\omega_o &  -\frac{\kappa}{2} \end{array}\right]^{-1} J\beta z_p dt + dw^{out}
\end{equation}
shows that the output field converges to a constant value plus zero mean white quantum noise with unit intensity. 

We now consider the construction of the vector $K$ defining the observer output $z_o$. This vector determines the quadrature of the output field which is measured by the homodyne detector. We first re-write equation (\ref{yo}) as
\[
dy_o =  ez_p dt + dw^{out}
\]
where 
\begin{equation}
\label{e}
e = - 2 \sqrt{\kappa}\left[\begin{array}{ll} -\frac{\kappa}{2} & 2 \omega_o \\-2\omega_o &  -\frac{\kappa}{2} \end{array}\right]^{-1} J\beta
\end{equation}
is a vector in $\rbb^{2}$. Then
\[
dz_o = Kez_pdt + Kdw^{out}.
\]
Hence, we choose $K$ such that
\begin{equation}
\label{Kconstraint}
Ke = 1
\end{equation}
and therefore 
\[
dz_o = z_pdt + dn
\]
where 
\[
dn = K dw^{out}
\]
will be a white noise process at steady state with intensity $\|K\|^2$. Thus, to maximize the signal to noise ratio for our measurement, we wish to choose $K$ to minimize $\|K\|^2$ subject to the constraint (\ref{Kconstraint}). Note that it follows from (\ref{Kconstraint}) and the Cauchy-Schwartz inequality that
\[
1 \leq \|K\|\|e\|
\]
and hence
\[
\|K\| \geq \frac{1}{\|e\|}. 
\]
However, if we choose 
\begin{equation}
\label{K}
K = \frac{e^T}{\|e\|^2}
\end{equation}
then (\ref{Kconstraint}) is satisfied and $\|K\| = \frac{1}{\|e\|}$. Hence, this value of $K$ must be the optimal $K$. 

We now consider the special case of $\omega_o = 0$. In this case, we obtain
\[
e =  2 \sqrt{\kappa}\left[\begin{array}{ll} \frac{2}{\kappa} & 0 \\0 & \frac{2}{\kappa} \end{array}\right] J\beta = \frac{4}{\sqrt{\kappa}}J\beta.
\]
Hence, as $\kappa \rightarrow 0$, $\|e\| \rightarrow \infty$ and therefore $\|K\| \rightarrow 0$. This means that we can make the noise level on our measurement arbitrarily small by choosing $\kappa > 0$ sufficiently small. However, as $\kappa$ gets smaller, the system (\ref{augmented4}) gets closer to instability and hence, takes longer to converge to steady state. 

\section{Kalman Filter for the Plant Observer System}
Since the QSDEs (\ref{augmented_system1}) describing the plant observer system are linear, it should be possible to apply Kalman filtering to this system in order to estimate $z_p$ based on the available measurements. However, the QSDEs (\ref{augmented_system1}) are not physically realizable; e.g., see \cite{JNP1,PET10B}. Hence, the quantum Kalman filter such as discussed in \cite{YAM06,WM10} formally does not apply; see also \cite{BHJ07}. Although the QSDEs (\ref{augmented_system1}) could be made physically realizable by adding an extra fictitious quadrature variable to pair with $z_p$, using the technique described in \cite{WNZJ13},  the issue would remain that $z_p$ corresponds to a finite level quantum system and hence, its initial condition cannot be Gaussian. To overcome this issue, we will take another approach to Kalman filtering for quantum systems noting that in the classical case, the Kalman filter also has the property that it is optimal linear unbiased estimator for a linear stochastic system, even in the case of non-Gaussian noise and initial conditions; e.g., see \cite{AT67,KS72}.

We first consider a general set of linear QSDSs:
\begin{eqnarray}
dx(t) &=& A(t)x(t)dt+B(t)dw;\quad x(t_0)=x_{0};\nonumber \\
dy(t) &=& C(t)x(t)dt+dw(t)
 \label{QSDEs}
\end{eqnarray}
where $x$ is a $n \times 1$ vector of  self adjoint operators on an underlying Hilbert space, 
$dw$ is a $m\times 1$ vector of quantum noises expressed in quadrature form corresponding to the input field of the system and $dy$ represents the corresponding output field; e.g., see \cite{JNP1,NJD09,PET10B}. Here $m$ is assumed to be even. The measured output of the system $z(t)$ will be  a real vector quantity of dimension $\frac{m}{2}$ obtained by applying homodyne detection to yield one quadrature of each of the output fields; i.e., we write
\begin{equation}
\label{homodyne}
dz(t) = Ddy(t) =DC(t)x(t)dt+Ddw(t).
\end{equation}
Also, $A(t) \in \rbb^{n\times n}$, $B(t) \in \rbb^{n\times m}$, $C(t)\in \rbb^{m \times n}$, $D\in \rbb^{\frac{m}{2}\times m}$. The augmented system (\ref{augmented_system1}) is a system of the form (\ref{QSDEs}).  

We will consider linear filters of the following form:
\begin{equation}
\label{filter1}
d \hat{x}(t) = F(t)\hat{x}(t)dt + G(t) dz(t); \quad \hat{x}(t_0) = \hat{x}_{0};
\end{equation}
where $\hat{x}(t)\in \rbb^{n}$ is a vector of estimates for $x$,  $F(t) \in \rbb^{n\times n}$ and $G(t)\in \rbb^{n \times \frac{m}{2}}$.  The filter (\ref{filter1}) is said to be an {\em unbiased estimator} for the system (\ref{QSDEs}) if
\[
<x(t)> = \mathbb{E}\left\{\hat{x}(t)\right\} \quad \forall t \geq t_0;
\]
e.g., see \cite{AT67}. Here $<x(t)> = \tr(\rho x(t))$ denotes the quantum expectation of $x(t)$ where $\rho$ is the system density operator; e.g. see \cite{JNP1,NJD09,PET10B}. It is straightforward to verify  that if (\ref{filter1}) is an unbiased estimator for the system (\ref{QSDEs}) then
\[
F(t) = A(t) - G(t) DC(t) \quad \forall t \geq  t_0
\]
and $\hat{x}_0 = <x_0>$; e.g., see \cite{AT67}. Hence, an unbiased estimator for the system (\ref{QSDEs}) will be a filter of the form 
\begin{eqnarray}
\label{filter2}
d \hat{x}(t) &=& \left(A(t) - G(t) DC(t)\right)\hat{x}(t)dt + G(t) dz(t); \nonumber \\
 \hat{x}(t_0) &=& <x_0>.
\end{eqnarray}
Corresponding to the system (\ref{augmented_system1}) and the filter (\ref{filter2}) is the estimation error
\[
e(t) = x(t) - \hat{x}(t)
\]
which satisfies 
\[
de(t) = \left[A(t) - G(t)D C(t)\right]e(t)dt +\left[B(t) - G(t)D\right]dw(t).
\]
The corresponding error variance is defined by 
\[
J = <e(T)^Te(T)> = \tr[\Sigma(T)]
\]
where
\[
\Sigma(T) = \frac{1}{2}<e(t)e(t)^T+(e(T)e(T)^T)^T>
\]
is the error covariance matrix. It is straightforward to verify that the matrix $\Sigma(T)$ satisfies the following matrix differential equation:
\begin{eqnarray*}
\dot \Sigma(t) &=& [A(t) - G(t)DC(t)]\Sigma(t)\nonumber \\
&&+\Sigma(t)[A(t)-G(t)DC(t)]^T\nonumber \\
&&+\left[B(t) - G(t)D\right]\left[B(t) - G(t)D\right]^T
\end{eqnarray*}
where 
\begin{eqnarray*}
\lefteqn{\Sigma(t_0)}\nonumber \\
 &=& \frac{1}{2}<e(t_0)e(t_0)^T+(e(t_0)e(t_0)^T)^T> \nonumber \\
&=& \frac{1}{2}<(x(t_0)-<x_0>)(x(t_0)-<x_0>)^T>\nonumber \\
&&+\frac{1}{2}<((x(t_0)-<x_0>)(x(t_0)-<x_0>)^T)^T>\nonumber \\
&=& \Sigma_0;
\end{eqnarray*}
e.g., see \cite{AT67}. The filter of the form (\ref{filter2}) which minimizes the quantity $J$ is the minimum variance unbiased estimator for the system (\ref{QSDEs}). This filter is a version of the Kalman filter for the case of general QSDEs of the form (\ref{QSDEs}).

\begin{theorem}
\label{T1}
The minimum variance unbiased estimator for the system (\ref{QSDEs}) is a filter of the form (\ref{filter2}) where
\[
G(t) = \left(\Sigma^*(t)C(t)^TD^T+B(t)D^T\right)\left(DD^T\right)^{-1}
\]
and $\Sigma^*(t)$ is defined by the matrix differential equation
\begin{eqnarray*}
\dot \Sigma^*(t) &=& [A(t) - B(t)D^T\left(DD^T\right)^{-1}DC(t)]\Sigma^*(t)\nonumber \\
&&+\Sigma^*(t)[A(t)- B(t)D^T\left(DD^T\right)^{-1}DC(t)]^T\nonumber \\
&&-\Sigma^*(t)C(t)^TD^T\left(DD^T\right)^{-1}DC(t)\Sigma^*(t)\nonumber \\
&&+B(t)B(t)^T-B(t)D^T\left(DD^T\right)^{-1}DB(t)^T;\nonumber \\
\Sigma^*(t_0) &=& \Sigma_0. 
\end{eqnarray*}
Furthermore, the error covariance matrix for this estimator is given by $\Sigma(t) \equiv \Sigma^*(t)$.
\end{theorem}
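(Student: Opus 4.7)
The plan is to adapt the classical derivation of the Kalman filter as a minimum-variance unbiased estimator (the Athans--Tse approach cited in the paper) to the symmetrized covariance $\Sigma(t)$ introduced above. The starting point is already in hand: the matrix differential equation for $\dot\Sigma(t)$ as a function of the free gain $G(t)$. First I would complete the square in $G$ pointwise in $t$. Expanding the $G$-dependent terms of $\dot\Sigma$ gives
\[
G(DD^T)G^T - G(DC(t)\Sigma + DB(t)^T) - (\Sigma C(t)^TD^T + B(t)D^T)G^T,
\]
and since $D$ has full row rank (it picks one quadrature of each output field) so that $DD^T$ is positive definite, this quadratic rearranges as
\[
\bigl(G-G^\circ(\Sigma)\bigr)(DD^T)\bigl(G-G^\circ(\Sigma)\bigr)^T - (\Sigma C^TD^T + BD^T)(DD^T)^{-1}(DC\Sigma + DB^T),
\]
where $G^\circ(\Sigma) = (BD^T + \Sigma C^TD^T)(DD^T)^{-1}$ is the pointwise PSD-minimizer of $\dot\Sigma$. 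Substituting $G = G^\circ(\Sigma^*)$ into the general ODE for $\dot\Sigma$, with $\Sigma$ replaced by $\Sigma^*$, and regrouping produces exactly the matrix Riccati equation quoted in the theorem; this verifies that the filter with the stated gain is unbiased with error covariance $\Sigma(t)\equiv \Sigma^*(t)$.

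The remaining step is to prove global optimality, i.e.\ that any other admissible $G(t)$ yields $\Sigma(t)\succeq \Sigma^*(t)$ in the PSD ordering, so in particular $\tr \Sigma(T)\geq \tr \Sigma^*(T)$. Writing $\Delta(t)=\Sigma(t)-\Sigma^*(t)$ with $\Delta(t_0)=0$, subtracting the two covariance ODEs and using $G^\circ(\Sigma)-G^\circ(\Sigma^*)=\Delta C^TD^T(DD^T)^{-1}$, I would collect terms into
\[
\dot\Delta = \tilde A(t)\Delta + \Delta\tilde A(t)^T - \Delta C^TD^T(DD^T)^{-1}DC\,\Delta + \bigl(G-G^\circ(\Sigma)\bigr)(DD^T)\bigl(G-G^\circ(\Sigma)\bigr)^T,
\]
where $\tilde A(t)=A(t)-G^\circ(\Sigma^*(t))DC(t)$ and the last summand is PSD. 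Positive semidefiniteness of $\Delta(t)$ then follows from a standard invariance argument for the PSD cone: at any first boundary time $t^*$ there would exist $v$ with $\Delta(t^*)v=0$, and substitution yields $v^T\dot\Delta(t^*)v = v^T(G-G^\circ(\Sigma))(DD^T)(G-G^\circ(\Sigma))^Tv \geq 0$, so trajectories starting at $0$ cannot escape the cone.

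I expect the main obstacle to be the bookkeeping in deriving the $\Delta$-equation above: the nonlinear term $-\Delta C^TD^T(DD^T)^{-1}DC\Delta$ only emerges after carefully separating the difference of the Riccati-style terms $(B+\Sigma C^T)D^T(DD^T)^{-1}D(B^T+C\Sigma)$ and its starred counterpart via $\Sigma=\Sigma^*+\Delta$. An alternative that avoids this matrix comparison is an orthogonality-principle argument: define the innovations $d\nu = dz - DC\hat x^*\,dt$ for the optimal filter, show that under the symmetrized quantum expectation the optimal error $e^*(t)$ is uncorrelated with every affine functional of $\{dz(s):s\leq t\}$, and then write any other error as $e=e^*+(\hat x^*-\hat x)$ to conclude $\Sigma = \Sigma^* + \langle(\hat x^*-\hat x)(\hat x^*-\hat x)^T\rangle_{\mathrm{sym}}\succeq \Sigma^*$. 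Both routes ultimately reduce to the classical Kalman-filter calculations once the symmetrized covariance is used consistently.
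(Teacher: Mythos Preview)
Your proposal is correct and is precisely the classical Athans--Tse/completion-of-the-square derivation that the paper itself invokes: the paper's own proof is a single sentence citing \cite{AT67,KS72} and stating that the argument is identical to the classical one. You have simply written out that argument in detail (with the symmetrized covariance in place of the classical one), so the approach is the same; your algebra for the $\dot\Delta$ equation and the cone-invariance/orthogonality alternatives are both sound.
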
 

\begin{proof}
The proof of this result follows by an identical argument to the proof of the corresponding classical result; e.g., see  \cite{AT67,KS72}.
\end{proof}

We now construct the above Kalman filter for the system (\ref{augmented_system1}). We assume that the density operator for the quantum plant (\ref{plant}) is $\rho_p$.  It follows that for $t_0 = 0$, 
\[
<z_p(t_0)> = \sum_{i=1}^3C_{pi}\tr(\rho_p\sigma_i) = \bar z_{p0}
\]
and write
\[
<x_o(t_0)>  = \bar x_{o0}. 
\]
Also using (\ref{Pauli_prod}), we calculate
\begin{eqnarray*}
<(z_p(t_0)-\bar z_{p0})^2> &=&  <z_p(t_0)^2> - \bar z_{p0}^2 \nonumber \\
&=&\sum_{i=1}^3C_{pi}^2 -  \bar z_{p0}^2 \nonumber \\
&=&\sigma_{p0}
\end{eqnarray*}
and write
\begin{eqnarray*}
&&\frac{1}{2}<(x_o(t_0)-\bar x_{o0})(x_o(t_0)-\bar x_{o0})^T>\nonumber \\
&&+\frac{1}{2}<((x(t_0)-\bar x_{o0})(x(t_0)-\bar x_{o0})^T)^T>\nonumber \\
&=& \Sigma_{o0}.
\end{eqnarray*}

Now the plant observer system (\ref{augmented_system1}) defines a set of QSDEs of the form (\ref{QSDEs}), (\ref{homodyne}) where
\begin{eqnarray*}
A(t) &\equiv& \left[\begin{array}{ll}0 & 0 \\
2J\beta & -\frac{\kappa}{2}I+2\omega_oJ\end{array}\right]; \nonumber \\
B(t) &\equiv& \left[\begin{array}{l}0 \\ -\sqrt{\kappa}I\end{array}\right]; \nonumber \\
 C(t) &\equiv& \left[\begin{array}{ll}0 & \sqrt{\kappa}I\end{array}\right]; \quad D= K; \nonumber \\
<x_0> &=&\left[\begin{array}{l}\bar z_{p0} \\ \bar x_{o0} \end{array}\right]; \quad 
\Sigma_0 = \left[\begin{array}{ll} \sigma_{p0} & 0 \\ 0 & \Sigma_{o0} \end{array}\right].
\end{eqnarray*}
Hence, the corresponding Kalman filter for the plant observer system (\ref{augmented_system1}) is defined by the equations
\begin{eqnarray*}
d \left[\begin{array}{l} \hat z_p \\ \hat x_o \end{array}\right]
&=& \left[\begin{array}{ll}0 & 0 \\
2J\beta & -\frac{\kappa}{2}I+2\omega_oJ\end{array}\right]\left[\begin{array}{l} \hat z_p \\ \hat x_o \end{array}\right]dt \nonumber \\
&&+ G(t)\left(dz_o - \left[\begin{array}{ll}0 & \sqrt{\kappa}K\end{array}\right]\left[\begin{array}{l} \hat z_p \\ \hat x_o \end{array}\right]dt\right);\nonumber \\
 \left[\begin{array}{l} \hat z_p(0) \\ \hat x_o(0) \end{array}\right] &=& \left[\begin{array}{l}\bar z_{p0} \\ \bar x_{o0} \end{array}\right];\nonumber \\
 G(t) &=& \left(\Sigma^*(t)
+I\right)\left[\begin{array}{l}0 \\ \sqrt{\kappa}I\end{array}\right]K^T\left(KK^T\right)^{-1};
\end{eqnarray*}
{\small
\begin{eqnarray*}
\lefteqn{ \dot \Sigma^*}\nonumber \\
&=& \left[\begin{array}{ll}0 & 0 \\
2J\beta & -\frac{\kappa}{2}I+2\omega_oJ+\kappa K^T\left(KK^T\right)^{-1}K\end{array}\right]\Sigma^*\nonumber \\
 &+&\Sigma^*\left[\begin{array}{ll}0 & 0 \\
2J\beta & -\frac{\kappa}{2}I+2\omega_oJ+\kappa K^T\left(KK^T\right)^{-1}K\end{array}\right]^T\nonumber \\
&-& \Sigma^*\left[\begin{array}{ll}0 & 0 \\
0 & \kappa K^T\left(KK^T\right)^{-1}K\end{array}\right]\Sigma^*\nonumber \\
 &+&\left[\begin{array}{ll}0 & 0 \\
0 & \kappa I -\kappa K^T\left(KK^T\right)^{-1}K\end{array}\right];\nonumber \\
\Sigma^*(0) &=& \left[\begin{array}{ll} \sigma_{p0} & 0 \\ 0 & \Sigma_{o0} \end{array}\right].
\end{eqnarray*}
}


\end{document}